\newtheorem{lemma}{Lemma}
\tikzstyle{box} = [rectangle, rounded corners, minimum width=1.5cm, minimum height=0.5cm,text centered, draw=black]
\tikzstyle{box} = [rectangle, rounded corners, minimum width=1.5cm, minimum height=0.5cm,text centered, draw=black]
\newcommand{\be}{\begin{equation}}
\newcommand{\ee}{\end{equation}}
\newcommand{\ba}{\begin{eqnarray}}
\newcommand{\ea}{\end{eqnarray}}
\newcommand{\bi}{\begin{itemize}}
\newcommand{\ei}{\end{itemize}}
\newcommand{\bn}{\begin{enumerate}}
\newcommand{\en}{\end{enumerate}}
\newcommand{\bfi}{\begin{figure}}
\newcommand{\efi}{\end{figure}}
\newcommand{\bcen}{\begin{center}}
\newcommand{\ecen}{\end{center}}
\newcommand{\seq}[2]{#1,\dots,#2}
\newcommand{\summ}[3]{\sum_{#1 = #2}^{#3}}
\newcommand{\ex}{\bee(\bx)}
\newcommand{\fx}{\bof(\bx)}
\newcommand{\epr}{\mathbf{e}(\bx')}
\newcommand{\simx}[1]{f_{#1}(\bx)}
\newcommand{\e}[1]{\ensuremath{{\rm E}[#1]}}
\newcommand{\cov}[2]{\ensuremath{{\rm Cov}\left[#1,#2\right]}}
\def \bee { \mathbf{e} }
\def \bof { \mathbf{f} }
\def \bg { \mathbf{g} }
\def \bu { \mathbf{u} }
\def \bx { \mathbf{x} }
\def \br { \mathbf{r} }
\def \bK { K }
\def \bL { L }
\def \bxi { \mathbf{\xi} }
\def \bb { \bm{\beta} }
\def \bmu { \bm{\mu} }
\def \Sig {\Sigma}
\def \bxi { \bm{\xi} }
\newcommand{\mG}{\mathcal{G}}
\newcommand{\mM}{\mathcal{M}}
\newcommand{\mN}{\mathcal{N}}
\newcommand{\mP}{\mathcal{P}}
\newcommand{\mS}{\mathcal{S}}
\newcommand{\mT}{\mathcal{T}}
\newcommand{\mX}{\mathcal{X}}
\newcommand{\mV}{\mathcal{V}}
\def \mbI { \mathbb{ I } }
\def \mbR { \mathbb{ R } }
\def \real { \mbR }
\def \MGP {\mM \mG \mP }
\def \TGP {\mT \mV \mG \mP}
\def \TN {\mT \mN}
\def \indicator { \mbI }
\def \real { \mbR }
\def \vect {\textrm {vec}}
\let\figref=\ref
\renewcommand{\figref}[1]{Figure \ref{#1}}
\let\eqref=\ref
\renewcommand{\eqref}[1]{Equation (\ref{#1})}
\newcommand{\nom}[2]{\nomenclature{$#1$}{#2}}
\newcommand{\ten}[1]{\mathscr{#1}}
\newcommand{\mtnote}[1]{\textsuperscript{\TPTtagStyle{#1}}}
\newcommand*{\email}[1]{\href{mailto:#1}{\nolinkurl{#1}} } 
\title{Tensor-variate Gaussian process regression for efficient emulation of complex systems: comparing regressor and covariance structures in outer product and parallel partial emulators.\thanks{Submitted to the editors 2024-12-12. This work was supported by the Chemical and Biological Technologies Department (contract HDTRA1-17-C-0028) and the UK Engineering and Physical Sciences Research Council (grants EP/R009902/1 and EP/S019472/1).}}
\begin{document}
\author{D. Semochkina\thanks{University of Southampton, UK 
  (\email{d.semochkina@soton.ac.uk}, \email{d.woods@soton.ac.uk}).}
\and S. E. Jackson\thanks{Durham University, Durham, UK (\email{samuel.e.jackson@durham.ac.uk})}
\and D. C. Woods\footnotemark[2]}
\maketitle

\begin{abstract}

Multi-output Gaussian process regression has become an important tool in uncertainty quantification, for building emulators of computationally expensive simulators, and other areas such as multi-task machine learning. We present a holistic development of tensor-variate Gaussian process (TvGP) regression, appropriate for arbitrary dimensional outputs where a Kronecker product structure is appropriate for the covariance. We show how two common approaches to problems with two-dimensional output, outer product emulators (OPE) and parallel partial emulators (PPE), are special cases of TvGP regression and hence can be extended to higher output dimensions. Focusing on the important special case of matrix output, we investigate the relative performance of these two approaches. The key distinction is the additional dependence structure assumed by the OPE, and we demonstrate when this is advantageous through two case studies, including application to a spatial-temporal influenza simulator. 
\end{abstract}

\begin{keywords}\ 
Emulation, Gaussian process regression, Separable covariance 
\end{keywords}


\section{Introduction}\label{sec:intro}

Scientific processes are commonly modeled using mathematical models implemented in computer codes, or simulators, that encapsulate the key features of the system and facilitate prediction and decision-making \cite{calder2018}.
When these simulators are computationally expensive, it is common to approximate them using statistical emulators, most commonly a Gaussian process, constructed from computer experiments \cite{santner2003tda}. When the simulator output is dynamic, evolving over dimensions such as space or time, development of efficient and effective emulators is more challenging.

As motivation, we study a simulator of the spatio-temporal dynamics of an influenza epidemic, developed by the UK Health Security Agency (UKHSA). The code implements an SEIR model \cite{hethcote2000} that describes the time progression of a closed population through the states \textit{Susceptible}, \textit{Exposed}, \textit{Infected} and \textit{Recovered}. The model also has a discrete spatial dimension, made up of a set of ``patches'' representing different urban communities. Movement between these patches is controlled via a commuter matrix which simulates flows of people for work, education and leisure. While there is a well-understood, although intractable, set of differential equations governing this system, for the purposes of this paper we treat the simulator as a block box modeling the number of infected individuals and each simulator run having as output a matrix of spatio-temporal responses (numbers of infections for each time and patch).   

Previous work \cite{higdon2008cmc,conti2010beo,bowman2016eom, overstall2016meo} on emulating simulators with dynamic or, relatedly, multivariate output has focused on vector-valued responses, where the output from each run is unstructured or the structure is preserved in, or collapsed onto, a single dimension. Most commonly, the covariance kernel for the Gaussian process emulator is formed as the product of two separate kernels describing, respectively, relationships between simulator inputs and simulator output indices.  Two approaches which have seen particular application are (i) the Outer Product Emulator \cite{rougier2008eef} (OPE), which extends the separability to include the mean function with a closely related model applied to spatio-temporal simulators \cite{bzkl2013}; and (ii) the Parallel Partial Emulator \cite{gu2016ppg} (PPE), which assumes separate emulators across the dynamic output indices with common regression functions, but distinct regression parameters, and common kernel functions with shared parameters.

Motivated by work in the machine learning literature \cite{xu2012,chu2009, zhao2014, campbell2020}, in this paper, we cast OPE and PPE as special cases of a tensor-variate Gaussian process (TvGP) which allows modeling of outputs of arbitrary dimension represented as tensors using separable covariance kernels (Section~\ref{sec:emulation}). We then perform an empirical comparison of the effectiveness of OPE and PPE via two case studies: the SEIR patch model described above (Section~\ref{sec:app1}) and a synthetic environmental simulator \cite{bliznyuk2008bayesian} of a pollutant spillage from a chemical accident (Section~\ref{sec:app2}). A primary aim of these studies is to assess the contribution of the different assumptions made for the two emulators, and to attempt to gain insight about situations when imposing beliefs about correlation structure is important.  As such, a discussion of the results and the differences between the two emulators are also provided (Section~\ref{sec:conc} and Appendix~\ref{appendix:comp}). Code for implementing the methods in this paper and reproducing the results is available at \url{https://github.com/samjacksonstats/TVGP}.

\section{Multivariate Emulation}
\label{sec:emulation}

\subsection{Separable emulators}\label{subsec:sep}

\nom{\bof}{simulator output vector,matrix or tensor}
\nom{\bx}{generic vector input to simulator}
\nom{\mX}{input domain}
\nom{p}{simulator input dimension}
\nom{r}{overall simulator output dimension, equal to $r_1 \times r_2$ later on}
\nom{\bmu}{emulator's mean}
\nom{\ex}{$r$-dimensional stochastic residual process}

Consider a generic multivariate simulator, represented by the function $ \bof $, which takes a vector $ \bx \in \mX \subset \real^p $ of input parameters, and outputs a vector $ \fx \in \real^r $.  We choose to represent our beliefs about the behavior of the simulator output for any input $ \bx $ in the following form \cite{goldstein2004pff}:
\be\label{eq:emulator}
\fx = \bmu(\bx) + \ex \,, 
\ee
where 
$\bmu : \real^p \rightarrow \real^{r}$ is a vector-valued mean function, typically expressed as a linear predictor, and $ \ex $ is an  $ r $-dimensional stochastic residual process aiming to capture residual variation across $ \mX $.

\nom{\kappa(\bx,\bx';\theta)}{correlation function, parameterised by $\theta$}
\nom{\theta}{parameters of correlation function $\kappa$}
\nom{\Sig}{output variance matrix}

A prior belief specification across the collection $ \{ \bmu(\bx),  \ex : \bx \in \mX \} $ leads to a prior specification about simulator output $ \fx $ at any $ \bx \in \mX $.
Such specification is not trivial; however, it is common for the specification to assume $ \e{\ex} = 0 $, $ \cov{\bmu(\bx)}{\ex} = \mathbf{0} $, and a separable covariance structure between the value of $ \ex $ and $ \epr $, at two inputs $ \bx $ and $ \bx' $, of the following form \cite{conti2010beo}:
\be
\cov{\ex}{\epr} =  \kappa(\bx,\bx'\,; \boldsymbol{\theta}_0) \, \Sig \label{eq:cov_fct} \, .
\ee
This assumption has been shown to be very effective for constructing emulators for multivariate simulators with large $r$. Covariance between the output components at any given input is given by the product of the positive semi-definite coregionalization matrix $ \Sig \in \real^{r \times r} $ and
the stationary kernel function $\kappa : \real^p\times \real^p \rightarrow \real$, parameterised by $\boldsymbol{\theta}_0$ (although we will drop the $\boldsymbol{\theta}_0$ in this expression from here-on in, assuming that such a parameterization is implicit). A common choice is the Gaussian correlation function \cite{kennedy2001bco, bayarri2007aff}:
\be\label{eq:gcf}
\kappa(\bx,\bx') = \exp \left ( - \summ{h}{1}{p} \left \{ \frac{x_h -x'_h}{\theta_{0,h}} \right \} ^2 \right ),
\ee
which requires specification of correlation length parameters $ \theta_{0,h}, h=\seq{1}{p} $.
%

One particular specification of a prior stochastic process for $\bof(\bx)$ with these properties is a multivariate Gaussian process (MvGP) \cite{conti2010beo,overstall2016meo},
\be
\label{eq:mgp}
\bof^T(\cdot) \sim \MGP( \bmu^\top(\cdot), \, \kappa(\cdot, \cdot), \Sig)\,.
\ee
This process is a multivariate generalisation for $r$-dimensional outputs of the usual one-dimensional Gaussian process \cite{rasmussen2006gpf}, and captures many of the representations developed throughout the literature.
\nom{\MGP}{multivariate Gaussian process distribution}
\subsection{Tensor-variate Gaussian processes}\label{subsec:tvgp}

There are several commonly observed structures in $\fx$, for example with entries corresponding to a spatial location in two dimensions or defined via space-time coordinates in higher dimensions. In Sections~\ref{sec:app1} and~\ref{sec:app2} we emulate simulators with two-dimensional space-time coordinates, that is, input $\bx$ produces outputs
\begin{equation}
\ten{F}(\bx)=
\begin{pmatrix}
 f_{11}(\bx)& \ldots & f_{1 r_2}(\bx)\\
 \vdots & \ddots & \vdots \\
 f_{r_1 1}(\bx) & \ldots & f_{r_1 r_2}(\bx)
 \end{pmatrix}.\nonumber
\end{equation}
Here $f_{i_1 i_2}(\bx) = f(\bx; s_{i_1}, t_{i_2})$ represents model output for spatial grid point $s_{i_1} \in \mS, i_1 = 1,\ldots,r_1$ at time $t_{i_2} \in \mT, i_2 = 1,\ldots,r_2$ corresponding to the model input parameters $\bx$. Sets $\mS$ and $\mT$ can be thought of as discretised sets of points within continuous spatial and temporal domains $\mathbb{S}$ and $\mathbb{T}$ respectively. That is, $\mS$ is a set of coordinate points of interest over a geographical region represented by $\mathbb{S}$, and $\mT$ is a set of time increments (hours or days) over time $\mathbb{T}$. 
More generally, if the output arises on a regular grid of higher-order coordinates, the simulator output can be represented by an $m$-dimensional array or tensor $\ten{F}(\bx) \coloneqq f_{i_1\ldots i_m}(\bx)\,:\, \times_{z=1}^m r_z$, with $i$th dimension having size $r_z$ and $\br^{\top} = (r_1,\ldots,r_m)$.
\nom{\ten{F}(\bx)}{$m$-dimensional array or tensor of simulator outputs at input $\bx$}
\nom{i_1}{indexing spatial output $\seq{1}{r_1}$}
\nom{i_2}{indexing temporal output $\seq{1}{r_2}$}
\nom{r_1}{number of spatial output components}
\nom{r_2}{number of temporal output components}
\nom{m}{tensor dimensionality}
\nom{\mS}{discrete set of spatial points in domain $\mathbb{S}$}
\nom{\mT}{discrete set of temporal points in domain $\mathbb{T}$}
\nom{\mathbb{S}}{(potentially) continuous spatial domain}
\nom{\mathbb{T}}{(potentially) continuous temporal domain}

Such output can be modeled using~\eqref{eq:mgp} via a vectorization operation \cite{ohlson2013} 
\be
\label{eq:vec}
\fx = \vect\left({\ten{F}(\bx)}\right) = \sum_{\mathbf{i}_\br}f_{i_1\ldots i_m}(\bx) \, \bm{\varepsilon}^\br_{i_1:i_m}\,,
\ee
where $\mathbf{i}_\br = \{(i_1,\ldots,i_m)\,:\, 1\le i_z \le r_z,\, 1\le z\le m\}$, $\bm{\varepsilon}^\br_{i_1:i_m} = \bm{\varepsilon}^{r_1}_{i_1}\otimes \cdots \otimes \bm{\varepsilon}^{r_m}_{i_m}$ and $\bm{\varepsilon}^r_i$ is a $r$-dimensional unit vector with $i$th entry 1. However, the coregionalization matrix would have dimension $r = \prod_{z=1}^m r_z$, possibly making inference and prediction from~\eqref{eq:mgp} computationally infeasible except in special cases such as output independence and diagonal $\Sig$, as utilized by the PPE. 

Alternatively, if an assumption of separability between the output dimensions is reasonable, the coregionalization matrix can be decomposed as
\nom{\bm{\varepsilon}^r_i}{an $r$-dimensional unit vector with $i$th entry equal to 1 and all other equal to 0}
\nom{\mathbf{i}_\br}{set of all possible combinations of all tensor indices $\{(i_1,\ldots,i_m)\}$}
\be
\label{eq:sig-tp}
\Sig = \Sig_1\otimes \cdots \otimes\Sig_m\,,
\ee
leading to a \textit{tensor-variate Gaussian process} (TvGP)
\nom{\Sig_i}{output variance matrix for output $i$}
\be
\label{eq:tgp}
\ten{F}(\cdot) \sim \TGP( \ten{M}(\cdot), \, \kappa(\cdot, \cdot), \Sig_1,\ldots,\Sig_m)\,,
\ee
with $\ten{M}\,:\, \real^p\rightarrow \real^{r_1\times \cdots \times r_m}$  a $m$-dimensional tensor-variate function. Such a process can be defined via a linear combination of independent univariate standard normal random variables, $\bu^{\rm T} = (u_1,\ldots,u_r)$, $u_l\sim \mathcal{N}(0, 1)$, via  
\nom{\ten{M}(\cdot)}{$m$-dimensional tensor-variate Gaussian process mean function}
\nom{\bu}{vector of independent univariate standard normal random variables}
$$
\vect\left({\ten{F}(\bx)}\right) = \vect\left({\ten{M}(\bx)}\right) + \kappa(\cdot, \cdot)^{\frac{1}{2}}\Sig^{\frac{1}{2}}\bu\,,
$$
where $\Sig$ has the Kronecker product form from~\eqref{eq:sig-tp} and any matrix square root can be used. In an analogy to \eqref{eq:emulator}, such a process can also be expressed as
\be
\label{eq:add_tvgp}
\ten{F}(\bx) = \ten{M}(\bx) + \ten{E}(\bx)\,,
\ee
where $\ten{E} \coloneqq e_{i_1\ldots i_m}(\bx)\,:\, \times_{z=1}^mr_z$ and $\ten{E}\sim \TGP( \ten{Z}, \, \kappa(\cdot, \cdot), \Sig_1,\ldots,\Sig_m)$, with $\ten{Z}$ the tensor with all components equal to 0.

A consequence of assuming this prior is that, for a collection of $n$ input vectors $\bx^{(1)},\ldots,\bx^{(n)}$, the $n\times r_1 \times \cdots \times r_m$ output tensor $\ten{F} = \{\ten{F}(\bx^{(1)}),\ldots, \ten{F}(\bx^{(n)})\}$ follows a tensor, or multi-linear, normal distribution \cite{ohlson2013} of dimension $m+1$
\nom{n}{number of training points}
\be\nonumber
\ten{F} \sim \TN_{n, r_1,\ldots, r_m}(\ten{M}, \bK, \Sig_1,\ldots, \Sig_m)\,,
\ee
with $m+1$ dimensional mean tensor $\ten{M} = \{\ten{M}(\bx^{(1)}), \ldots, \ten{M}(\bx^{(n)})\}$ and $\bK$ having $jk$-th entry $\kappa(\bx^{(j)}, \bx^{(k)})$. Vectorization of $\ten{F}$ results in a multivariate normal distribution
\nom{\bK}{input variance matrix}
\nom{\ten{M}}{$m+1$-dimensional mean tensor of $\ten{M}(\cdot)$, evaluated at $\seq{\bx^{(1)}}{\bx^{(n)}}$}
\be\label{eq:vecMN}
\vect\left(\ten{F}\right)\sim \mathcal{N}_{nr}(\vect(\ten{M}), \bK\otimes (\otimes_{i=1}^m\Sig))\,.
\ee

TvGP regression is defined via the joint distribution of $\ten{F}$ and $\tilde{\ten{F}} = \{\ten{F}(\tilde{\bx}^{(1)}), \ldots, \ten{F}(\tilde{\bx}^{(n^{\prime})})\}$ at new inputs $\tilde{\bx}^{(1)}, \ldots, \tilde{\bx}^{(n^{\prime})}$, given by
\be\nonumber
\{\ten{F}, \tilde{\ten{F}}\} \sim \mathcal{TN}_{n+n^{\prime}, r_1, \ldots, r_m}\left(\ten{M}_+, \bK_+, \Sig_1, \ldots, \Sig_m\right)\,,
\ee
with $\ten{M}_+ = \{\ten{M}, \tilde{\ten{M}}\}$, $\tilde{\ten{M}} = \{M(\tilde{\bx}^{(1)}), \ldots,  M(\tilde{\bx}^{(n^{\prime})})\}$,
\nom{\tilde{\bx}}{generic new vector input to simulator (as opposed to $\bx^{(1)},\ldots,\bx^{(n)}$)}
\nom{n^{\prime}}{number of additional training points}
\nom{\tilde{\ten{M}}}{$m+1$-dimensional mean tensor of $\ten{M}(\cdot)$, evaluated at $\seq{\tilde{\bx}^{(1)}}{\tilde{\bx}^{(n^{\prime)}}}$}
\nom{\ten{M}_+}{$\{\ten{M}, \tilde{\ten{M}}\}$}
\be\nonumber
\bK_+ = 
\begin{bmatrix}
\bK & \bL \\
\bL^T & \tilde{\bK}
\end{bmatrix}\,,
\ee
$\tilde{\bK}_{j^{\prime}k^{\prime}} = \kappa(\tilde{\bx}^{(j^{\prime})}, \tilde{\bx}^{(k^{\prime})})$ ($j^{\prime},k^{\prime} = 1,\ldots, n^{\prime}$) and $\bL_{jk^{\prime}} = \kappa(\bx^{(j)}, \tilde{\bx}^{(k^{\prime})})$ ($j = 1,\ldots, n;\,k^{\prime} = 1,\ldots,n^{\prime}$). Then, conditional on $\ten{F}$ we have the posterior predictive distribution
\nom{\bL}{old input $\bx$ and new input $\tilde{\bx}$ covariance matrix}
\nom{\tilde{\bK}}{input variance matrix at new inputs $\tilde{\bx}^{(1)}, \ldots, \tilde{\bx}^{(n^{\prime})}$}
\be\nonumber
\tilde{\ten{F}} | \ten{F} \sim \mathcal{TN}_{n^{\prime}, r_1,\ldots, r_m}\left(\tilde{\ten{M}}^\star, \tilde{\bK}^\star, \Sig_1, \ldots, \Sig_m \right)\,,
\ee
with $\tilde{\ten{M}}^\star = \{\ten{M}^\star(\tilde{\bx}^{(1)}), \ldots, \ten{M}^\star(\tilde{\bx}^{(n^{\prime})})\}$,
\be\label{eq:TvGPpredmean}
\ten{M}^\star(\tilde{\bx}) = \ten{M}(\tilde{\bx}) + [\ten{F} - \ten{M}]\times_1\kappa(\tilde{\bx})^T\bK^{-1}\,,
\ee
$\kappa(\tilde{\bx}) = (\seq{\kappa(\tilde{\bx},\bx^{(1)})}{\kappa(\tilde{\bx},\bx^{(n)})})^T$, $\tilde{K}^\star_{j'k'} = \kappa^\star(\tilde{\bx}^{(j')}, \tilde{\bx}^{(k')})$ and 
\be\nonumber
\kappa^\star(\tilde{\bx}, \tilde{\bx}^\prime) = \kappa(\tilde{\bx}, \tilde{\bx}^\prime) - \kappa(\tilde{\bx})^T\bK^{-1}\kappa(\tilde{\bx}^\prime)\,.
\ee
Here, $\times_c$ is the $c$-mode tensor-matrix product defined for tensor $\ten{P}$ and matrix $\Lambda$, with entries $p_{i_1 \cdots i_m}$ and $\lambda_{ab}$ respectively, as
\nom{\tilde{\bK}^\star}{posterior input covariance matrix}
\nom{\tilde{\ten{M}}^\star}{posterior $m+1$-dimensional mean tensor}
\nom{\ten{P}}{generic tensor}
\nom{\Lambda}{generic matrix}
\nom{\times_c}{$c$-mode tensor-matrix product}
\nom{\kappa^\star}{updated (posterior) covariance function}
$$
(\ten{P}\times_c\Lambda)_{i_1\ldots i_{c-1} a i_{c+1}\ldots i_m} = \sum_b \lambda_{ab}p_{i_1\ldots i_{c-1} b i_{c+1}\ldots i_m}\,.
$$
Hence we have the posterior process
\be\nonumber
\ten{F}(\bx) | \ten{F} \sim \mathcal{TVGP}(\ten{M}^\star(\cdot), \kappa^\star(\cdot, \cdot), \Sig_1, \ldots, \Sig_m)\,.
\ee

The statistical surrogate model of $\bof(\cdot)$ is usually not specified with fixed chosen values for the elements of $\Sig$ and $\theta$. If a prior distributional specification is assumed across this collection, the updated posterior distributional form $\ten{F}$ is no longer a Gaussian process, and numerical inference methods must be used. Typically, a vectorization of the mean tensor $\ten{M}(\cdot)$ is formed as a combination of a linear predictor and unknown parameters (Appendix~\ref{appendix:comp}). These mean parameters are integrated over with respect to a prior distribution, retaining a posterior predictive Gaussian process. Covariance parameters are often estimated as, e.g., posterior modes and plugged into the posterior predictive density \cite[ch. 5]{gramacy2020surrogates}.  


\subsection{Emulation of Dynamic Simulators}\label{sec:EmulationOfDynamicSimulators}

A straightforward approach for emulation of spatio-temporal simulators is to treat the $ r = r_1 \times r_2 $ space-time outputs
as a general multivariate collection of random variables and emulate directly using the MvGP from~\eqref{eq:mgp} \cite{ohagan1999uaa, conti2010beo}. In this case, the indexing of the model output enters only indirectly through specification of the prior variance matrix for the outputs. Such an emulator is simple to implement, however, computationally expensive for large values of $r$. It may also be difficult to specify or elicit the required parameters involved in capturing the covariances between all of the outputs.  

An alternative approach is to construct a separate emulator for each space-time output, indexed by $ i = \seq{1}{r} $ \cite{conti2010beo}:
\be
f_i(\bx) = \mu_i(\bx) + e_i(\bx)  \label{sep_em},
\ee
with $e_i$ given independent, zero-mean, Gaussian process priors. As a result, only data for the $ i $th output is utilised when emulating output at spatial-temporal coordinate $i$. Prior correlation across space-time output is not captured. Additionally, constructing each emulator completely independently, including specification of parameters, is computationally intensive; commonly, $r_1$ and/or $r_2$ may be in the order of (tens of) thousands; hence, even if using parallel computation,  estimation will be expensive. Even for smaller grid sizes, computation can be reduced by orders of magnitude through the use of a TvGP.
\nom{\bb_i}{vector of regression parameters for output $i$}
\nom{\bg(\cdot)}{regressors function}

Extensions or variants of these two approaches have proved popular in applications \cite{rgmr2009,sbbcppw2014,kitnya2023,Gao_2024}. In the next sections, we place two such methods in the context of tensor-variate Gaussian processes with $m=2$ dimensional tensor (or matrix) output $\ten{F} = f_{i_1i_2}(\bx)\,:\,r_1\times r_2$. 


\subsection{Outer Product Emulators}\label{sec:OPE}

An Outer Product Emulator (OPE) for multivariate responses \cite{rougier2008eef} extends the separable emulator~(\ref{eq:mgp}) by assuming a mean function constructed from a further separable structure in the regressors. Such structure can be extended to higher-dimensional outputs,  defined over the inputs and all output locations within the TvGP expressed as \eqref{eq:add_tvgp}. Each component is modeled as

$$
\simx{i_1 \cdots i_m}=\bg_{i_1 \cdots i_m}^\top(\bx)\bb +e_{i_1 \cdots i_m}(\bx),\quad i_z=1,\ldots,r_z;\ z=1,\ldots,m,
$$
where $\bg_{i_1 \cdots i_m}^{\top}(\bx)=\bg_{0}^{\top}(\bx) \otimes (\otimes_{z=1}^m\bg_{z}^{\top}(\bxi_{zi_z})),\ \boldsymbol{\beta}=\boldsymbol{\beta}_{0} \otimes (\otimes_{z=1}^m\boldsymbol{\beta}_{z})$, with $\boldsymbol{g}_0(\cdot)$ and $\boldsymbol{\beta}_0$ being $v_0$-vectors of regressors and parameters related to the input variables, and $\boldsymbol{g}_z(\cdot)$ and $\boldsymbol{\beta}_z$ $v_z$-vectors corresponding to the output locations in dimension $z$. Hence, the TvGP has mean tensor
\be\label{eq:OPEmean}
\ten{M}(\bx) = \mathbf{g}^\top_{i_1 \cdots i_m}(\bx)\bb\,:\,\times_{z=1}^mr_z\,.
\ee

For the $m=2$ dimensional spatio-temporal case, we define $\bxi_{1i_1} = s_{i_1}\in\mathcal{S}$ as the spatial locations and temporal locations $\bxi_{2i_2} = t_{i_2}\in\mathcal{T}$ and 
$$
\begin{aligned}
\mathcal{M}(\bx) & =
\begin{bmatrix}
\bg_{11}^{\top}(\bx) \boldsymbol{\beta} & \cdots & \bg_{1 r_{2}}^{\top}(\bx) \boldsymbol{\beta} \\
\vdots & \ddots & \vdots \\
\bg_{r_{1} 1}^{\top}(\bx) \boldsymbol{\beta} & \cdots & \bg_{r_{1} r_{2}}^{\top}(\bx) \boldsymbol{\beta}
\end{bmatrix}\\
& =G(\bx)\left[I_{r 2} \otimes \boldsymbol{\beta}\right]
\end{aligned},
$$
where 
$$G(\bx)=
\begin{bmatrix}
\bg_{11}^{\top}(\bx) & \cdots & \bg_{1 r_{2}}^{\top}(\bx)  \\
\vdots & \ddots & \vdots \\
\bg_{r_{1} 1}^{\top}(\bx) & \cdots & \bg_{r_{1} r_{2}}^{\top}(\bx)
\end{bmatrix}\,,
$$
and $I_{r_2}$ is an $r_2\times r_2$ identity matrix. 

The TvGP is completed by specification of coregionalization matrix $\Sig = \otimes_{z=1}^m \Sig_z$ for residual tensor $\ten{E}$. For emulators with spatio-temporal output, 
\be
\Sig=\Sig_{1} \otimes \Sig_{2}\,,\nonumber
\ee
with $\Sig_{1}$ an $r_1\times r_1$ spatial coregionalization matrix and $\Sig_{2}$ an $r_2\times r_2$ temporal coregionalization matrix. Advantage can be taken of naturally defined metric spaces on the output locations to borrow strength across different space-time coordinates via suitable prior specifications. Here, we assume $\Sig = \sigma^2W_1\otimes W_2$. Hence $\Sig_{1} = W_1$, with $r_1\times r_1$ matrix $W^s$ having entries defined via a kernel specifying response correlation on the spatial output locations, and $\Sig_{2}=\sigma^{2} W_2$, where $r_2\times r_2$ matrix $W^t$ is similarly defined via a kernel for response correlation on the temporal output locations. The single scale parameter is $\sigma^2$. Efficient computational methods are available for the OPE with matrix-variate output as appropriate for spatio-temporal simulators; see \cite{bzkl2013} and Appendix~\ref{appendix:vec}, which demonstrates that a vectorization of the tensor-variate OPE maintains the usual form of a vector-variate OPE.


\subsection{Parallel Partial Emulators}\label{sec:PPE}

A Parallel Partial Emulator (PPE), introduced in \cite{gu2016ppg}, involves fitting separate, shared-characteristic emulators at each output grid point: 
\be
\simx{i_1 \cdots i_m} = \bg_0^\top(\bx) \bb_{i_1 \cdots i_m} + e_{i_1 \cdots i_m}(\bx)\,.
\ee
The emulators have a common $v_0$-vector of basis functions $ \bg_0(\cdot)$, but distinct $v_0$-vectors of regression coefficients $\bb_{i_1 \cdots i_m}$. Emulators at different output locations are assumed conditionally independent, a priori, and hence $\Sig = \operatorname{diag}\left(\sigma^2_1, \ldots, \sigma^2_r\right)$. The PPE can be expressed as a TvGP with a single output dimension of size $r = \prod_{z=1}^m r_z$, that combines all output indices. This is equivalent to MvGP~(\ref{eq:mgp}). The emulator has mean vector 
\be\label{eq:PPEmean}
\ten{M}(\bx)= \left[I_{r} \otimes \bg_{0}^{\top}(\bx)\right]\vect\left(\ten{B}\right)\,,
\ee
with $\ten{B} = \beta_{i_1 \cdots i_mj}\,:\,[\times_{z=1}^m r_z]\times  v_0$. Here $\ten{B}$ is a tensor representation of the output-specific ($i_1 \cdots i_m$) coefficients of the common $v_0$ regressors and $\beta_{i_1 \cdots i_m j}$ is the $j$th element of $\bb_{i_1 \cdots i_m}$.

The shared structure of the linear predictor and kernel function, including common parameters $\boldsymbol{\theta}$ results in the PPE inheriting properties of the simulator, including smoothness \cite{gu2016ppg}. Completely independent emulators for each space-time coordinate would not have this property.  

\subsection{Priors and estimation}\label{sec:priorest}

Connections between the OPE and PPE can be established via the choice of hyperparameter priors \cite{rougier2008eef, conti2010beo, gu2016ppg}, i.e., for $\bb$ and $\Sigma_z$; see Appendix~\ref{appendix:coreg}. Here, for the OPE we use a conjugate normal-inverse gamma joint prior for $\bb,\sigma^2$ \cite{rougier2008eef}, leading to a marginal multivariate $t$-distribution prior for $\bb$

In the mean structure, for both emulators, we specify input regressors $\mathbf{g}_0^\top = (1, \bx^\top)$. For the OPE we also specify temporal regressors $\mathbf{g}_2^\top = (1, t)$. The input correlation for both emulators was specified as Gaussian from \eqref{eq:gcf}. For the OPE, the Gaussian correlation function was also used to parameterize the temporal correlations, with
$$
(\Sig_2)_{i_2i_2^\prime} = \sigma^2\exp\left(-\{t_{i_2} - t_{i_2^\prime}\}^2/\theta_2\right)\,,
$$
and hyperparameter $\theta_2>0$. The choice of spatial regressors and correlation function for the OPE varies between our two examples. 

The OPE is completed via choice of parameters for the normal-inverse gamma prior discussed above; we use $\sigma^{-2}\sim \text{Gamma}(1, 1)$ and $\bb|\sigma^2\sim N(\mathbf{0}, \sigma^2I_{v_0v_1v_2})$. Hence, the marginal prior distribution for $\bb$ is a scaled $t$-distribution with two degrees of freedom, and enforces shrinkage towards zero for the posterior of the regression parameters. All correlation parameters were estimated via maximum likelihood, which corresponds to minimizing the Mahalanobis distance, using the L-BFGS-B algorithm \cite{byrd1995limited}.

For the PPE, vague prior beliefs are assumed for $ \bb_{i_1i_2} $, leading to posterior mean and variance estimators for $ \bb_{i_1i_2} $ equivalent to generalized least squares estimators \cite{jackson2018dop}. We obtain point estimates for $ \sigma_i^2 $ and common correlation parameters $ \boldsymbol{\theta} $ via maximum likelihood \cite{andrianakis2012teo, jackson2020uhc}, with the latter obtained numerically.

Two different computer experiments were designed and performed to generate training data, with $n = 20$ ($\mX_{20}$) and $n=50$ ($\mX_{50}$). Both designs were found using the MaxPro \cite{joseph2015mpd} space-filling criterion, with input dimensions each scaled to $[-1,1]$. All statistical modeling also used this scaling, which was also applied to spatial and temporal locations. A further MaxPro design, $\mX_D$ of size $n^\prime = 150$, was constructed and run for diagnostic purposes.

\section{Application: Dynamic Influenza Simulator}
\label{sec:app1}

\subsection{The Simulator}

Our first application is to the compartmental SEIR simulator for influenza introduced in Section~\ref{sec:intro}. The spatial patches \cite{vdDriessche} represent population distributions among 15 fictional regions. The population in each patch moves through the four states ($S$, $E$, $I$ and $R$) over time according to a system of deterministic differential equations, governed by a set of rate parameters which are consistent across all patches. Previous work on emulating SEIR simulators has not included such a spatial component \cite{farah2014bea}.

Interactions between infection levels in neighboring patches are represented by a non-symmetric (commuter) matrix, through which movement between the different patches is modeled \cite{sd1995}, with row $a$ column $b$ representing the flow of people from patch $a$ to patch $b$. Two daily timesteps are assumed, simulating travel from a ``home" patch to a ``work" patch and back again. The susceptible population in patch $a$ can become infected if they mix with (i) infected individuals from patch $a$; (ii) infected individuals in patch $a$ who have moved from patch $b$; or (iii) infected individuals within patch $b$ if they themselves commute to patch $b$.

Here, spatio-temporal emulators are built for the number of infected individuals in each patch over time. The simulator is initialized with 100 infected individuals in a single fixed patch with the remaining population across all patches being susceptible. The input parameters to the simulator are rates $x_1=\beta\in [1, 2]$, $x_2=\alpha\in [0.25, 1]$ and $x_3=\gamma\in [0.25, 1]$, which are constant across space and time and effectively control the modeled transmission rates from $S$ to $E$, $E$ to $I$ and $I$ to $R$, respectively. Initial conditions for the simulator and the ranges for these inputs were chosen by the epidemiological modelers at UKHSA.

\subsection{Emulation}\label{sec:patch_emul}

Tensor-variate GP emulators were constructed for this simulator, under each of the OPE (Section~\ref{sec:OPE}) and PPE assumptions (Section~\ref{sec:PPE}). The simulator output at each space-time location, $y_{i_1i_2}(\mathbf{x})$, was shifted and log transformed before emulation; that is, $\simx{i_1 i_2} = \log\left(y_{i_1i_2}(\mathbf{x})+1\right)$. Each simulator run produced numbers of infected individuals across $r_1 = 15$ spatial patches and $r_2=150$ discrete time steps.

The OPE and PPE were constructed using the priors described in Section~\ref{sec:priorest}. The fixed initial conditions and commuter matrix led, for all combinations of input variables, to the same temporal order of patches obtaining their respective maximum number of infections. This ordering was used to define a new spatial coordinate for each patch, $\tilde{s}_{i_1}$, taking values in $\{1,\ldots,15\}$, which are common across all simulator runs. Hence these quantities can be used to define $\Sig_1$ for the OPE. This matrix was parameterized using a Gaussian correlation function and therefore for the OPE
$$
(\Sig_1)_{i_1i_1^\prime} = \exp\left(-\{\tilde{s}_{i_1} - \tilde{s}_{i_1^\prime}\}^2/\theta_1\right)\,,
$$
with hyperparameter $\theta_1>0$. For the prior mean structure, we use $\bg_{1}^{\top}(\mathring{s}_{i_1}) = (1, \mathring{s}_{i_1}),$ where $\mathring{s}_{i_1}$ is a scaled patch size. 

\subsection{Results}\label{sec:patchresults}

\begin{figure}
 \begin{center}
\includegraphics[width=\linewidth]{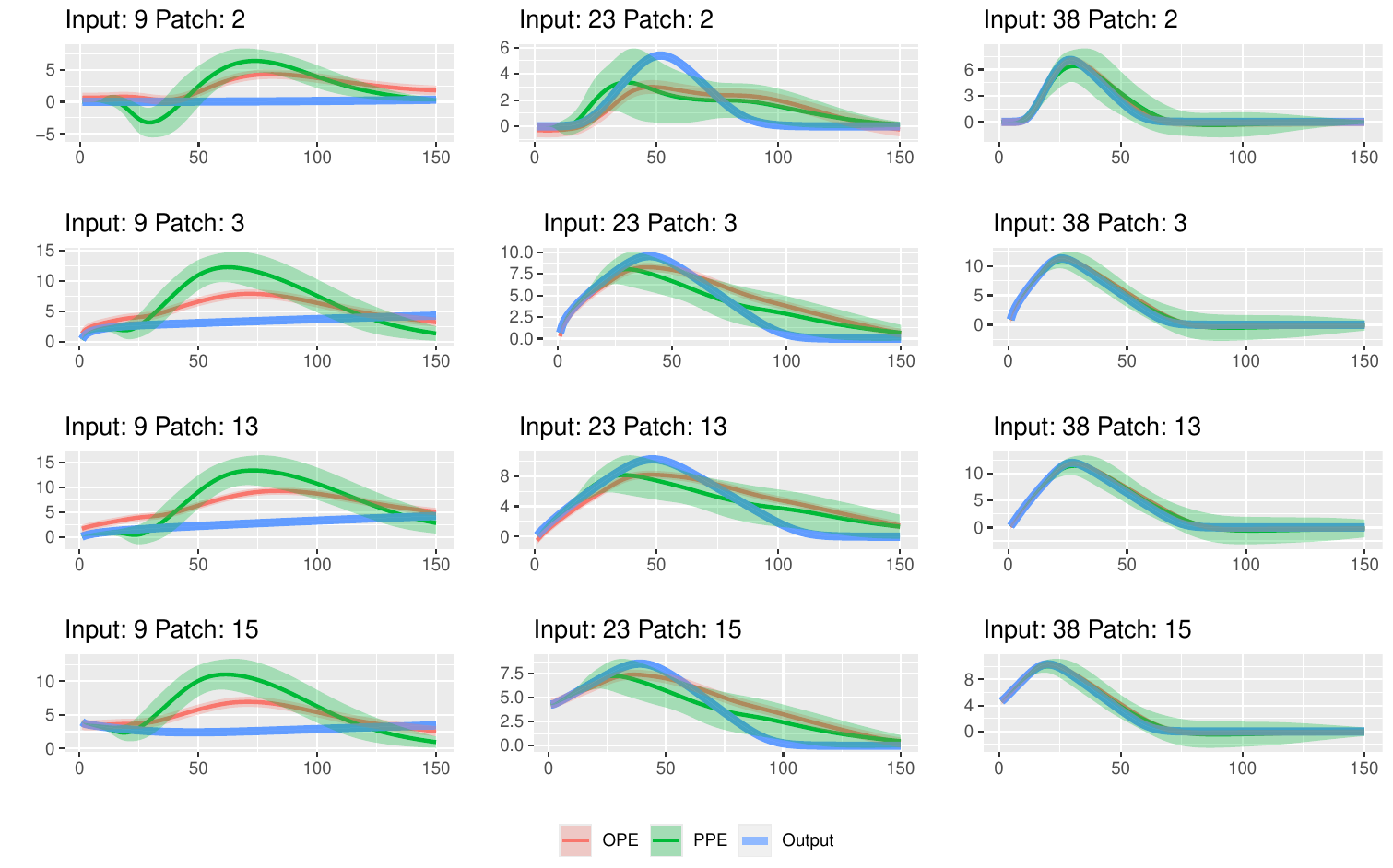}
\caption{Emulator mean predictions (solid lines) and 95\% credible intervals (shaded area) for the influenza simulator for three test input points (columns) and four patches (rows). The OPE (red) and PPE (green) were were built using $n=20$ training points. In each case, true simulator output is given as a blue line for comparison. Predictions are on the log scale.} 
\label{fig:OPEvPPE20}
\end{center}
\end{figure}

\begin{figure}
 \begin{center}
\includegraphics[width=\linewidth]{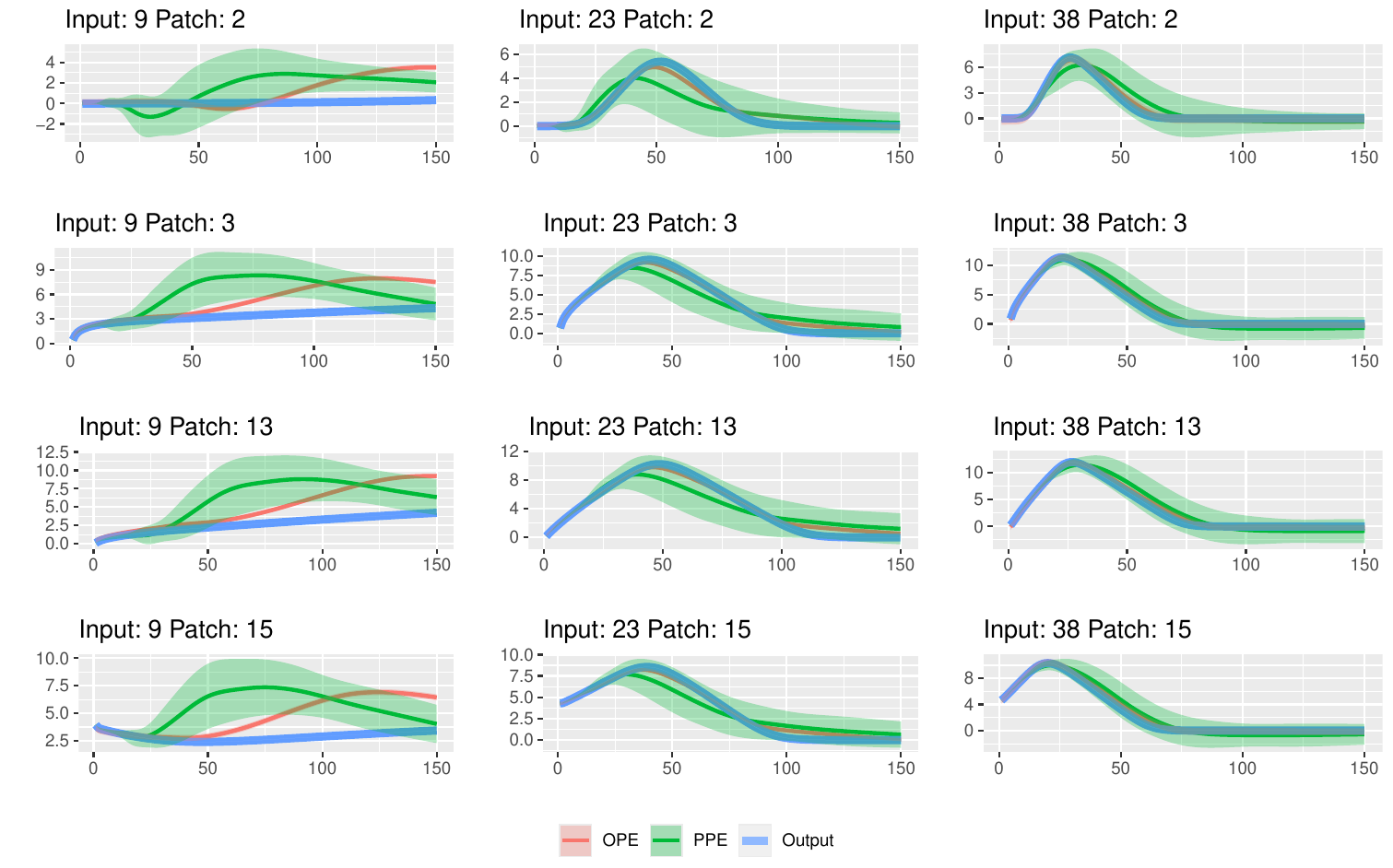}
\caption{Emulator mean predictions (solid lines) and 95\% credible intervals (shaded area) for the influenza simulator for three test input points (columns) and four patches (rows). The OPE (red) and PPE (green) were were built using $n=50$ training points. In each case, true simulator output is given as a blue line for comparison. Predictions are on the log scale.} 
\label{fig:OPEvPPE50}
\end{center}
\end{figure}

Figures~\ref{fig:OPEvPPE20} ($n=20$) and~\ref{fig:OPEvPPE50} ($n=50$) give emulator predictions  for four arbitrarily selected patches and three points from the diagnostic set $\mX_D$. The values of the input parameters for these three diagnostic points are given in Table~\ref{tab:inputs}.

\begin{table}
\caption{Diagnostic input parameters for the influenza simulator corresponding to the points in Figures~\ref{fig:OPEvPPE20} and~\ref{fig:OPEvPPE50}, scaled to $[-1,1]$.}
\label{tab:inputs}
\centering
\begin{tabular}{cccc}
\toprule
Input &  $\alpha$ & $\beta$ & $\gamma$ \\
\midrule
9 & 0.19 & -0.92 & 0.99 \\
23 & 0.68 & -0.73 & 0.20 \\
38 & 0.67 & -0.54 & -0.78 \\
\bottomrule
\end{tabular}
\end{table}

Predictions for input 9 were notably poor for both OPE and PPE emulators, regardless of training design size (left-hand columns of \figref{fig:OPEvPPE20} and \figref{fig:OPEvPPE50}). This input point is near the edge of the input space (Table~\ref{tab:inputs} and Figure~\ref{fig:dist_nearest}), limiting the ability of the emulators to borrow information from nearby points. Although input 9 is actually closer to a design point than either inputs 23 or 38, the number of design points within a sphere of radius $\rho$ is smaller for input 9 than for either of the other inputs for almost all values of $0\le\rho\le 3$, see Figure~\ref{fig:sphere}. Additionally, the specific parameter values associated with input 9 resulted in minimal disease dynamics, with almost no infections in any patch. This type of scenario was rare in the training and test designs and hence poses a challenge for the emulators to capture accurately.

\begin{figure}
\centering
\begin{tabular}{cc}
(a) & (b) \\
\includegraphics[width=.45\linewidth]{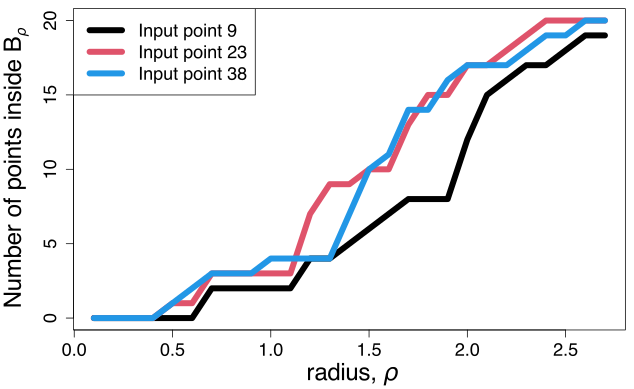} &
\includegraphics[width=.45\linewidth]{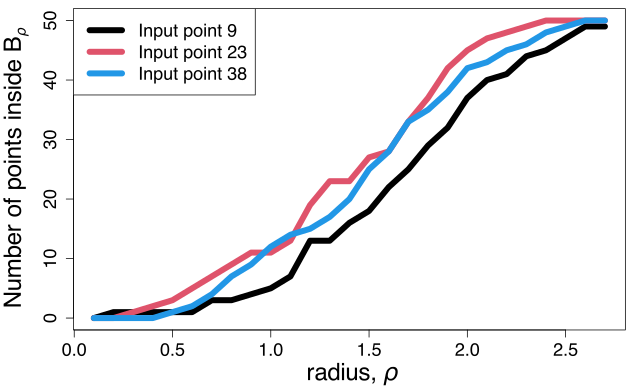}
\end{tabular}
\caption{\label{fig:sphere}Numbers of design points within balls B$_\rho$ of radius $\rho$ from the three diagnostic points for the influenza simulator example for the training designs with (a) $n=20$ and (b) $n=50$.}
\end{figure}

For a more holistic comparison of the emulators, \figref{fig:flu_diag} graphically assesses prediction for $10^4$ input-space-time points (combinations of $\bx^{(i)}$, $\tilde{s}_{i_1}$ or ${s}_{i_1}$ and $t_{i_2}$), randomly selected from the diagnostic set $\mX_D$. Table~\ref{tab:num_diag} gives numerical summaries of the predictive performance of each emulator (with definitions of the summaries in Appendix~\ref{appendix:diag}). Unsurprisingly, both emulators are improved by increasing the training design size from $n=20$ to $n=50$. The differing structure of the models is also apparent in the results; for example, the predictive variance for the OPE is constant across space-time, whereas each location has a potentially different variance for the PPE. This feature leads to more appropriate uncertainty quantification for the PPE, reflected in the mean absolute squared prediction error (MASPE) and mean generalized entropy score (MGES) values in Table~\ref{tab:num_diag}. However, the OPE does appear more accurate (lower root mean squared prediction error (RMSPE)) than the PPE for both $n=20$ and $n=50$, possibly due to the spatial-temporal structure incorporated through both the choice of regressors and space/time correlation matrices. 

Clearly, an advantage of the OPE is the direct and explicit incorporation of the spatio-temporal structure and exploitation of any continuity in these dimensions. This feature was somewhat negated in the influenza example by the limited, and discrete, nature of the spatial information. In the next section, we demonstrate the benefits of the OPE of a metric space in both the space and time dimensions.

\begin{figure} 
 \begin{center}
\begin{subfigure}{.49\textwidth}
    \centering
    \includegraphics[width=\textwidth]{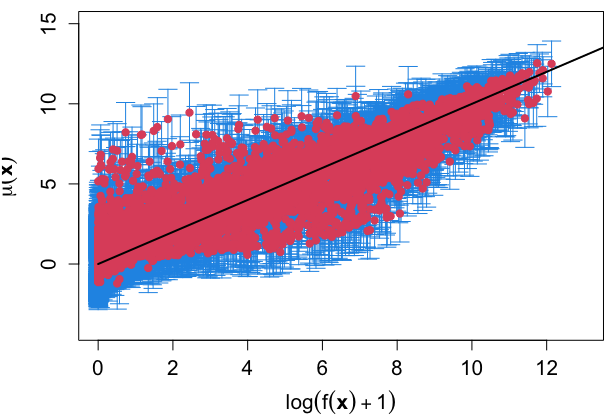}
    \caption{OPE $n = 20$}
\end{subfigure}
\begin{subfigure}{.49\textwidth}
    \centering
    \includegraphics[width=\textwidth]{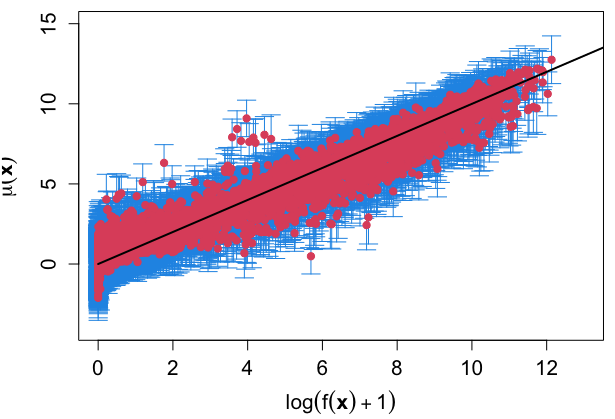}
    \caption{OPE $n = 50$}
\end{subfigure}

\begin{subfigure}{.49\textwidth}
    \centering
    \includegraphics[width=\textwidth]{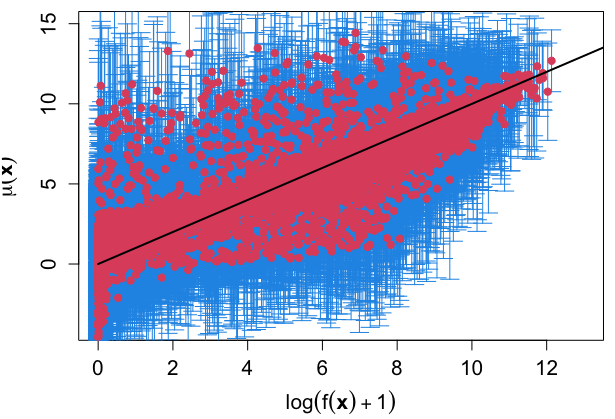}
    \caption{PPE $n = 20$}
\end{subfigure}
\begin{subfigure}{.49\textwidth}
    \centering
    \includegraphics[width=\textwidth]{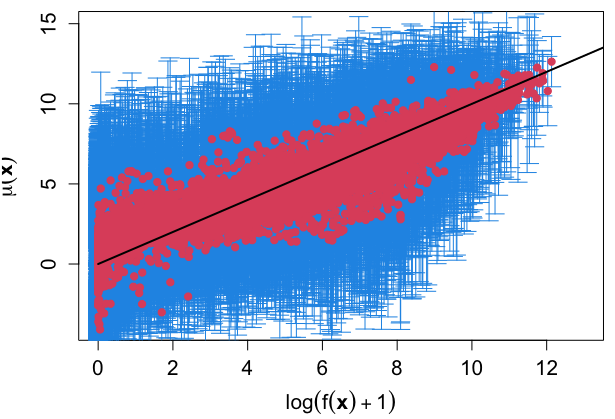}
    \caption{PPE $n = 50$}
\end{subfigure}
\caption{Emulator expectation $\pm3$ standard deviations against influenza simulator output for $10^4$ points (input, space, time) from the set $\mX_D$ of $n^\prime = 150$ diagnostic runs for OPE and PPE with $n=20$ and $n=50$ training points.} 
\label{fig:flu_diag}
\end{center}
\end{figure}

\begin{table}
\caption{The mean absolute squared prediction error (MASPE)\mtnote{1}, root mean squared prediction error (RMSPE)\mtnote{2} and mean generalized entropy score (MGES)\mtnote{3} for each of the four approximations and two design sizes discussed in Sections~\ref{sec:app1} and \ref{sec:app2}. The performance measures are calculated on the test set $\mX_D$. See Appendix~\ref{appendix:diag} for definitions. \label{tab:num_diag}}
\begin{threeparttable}
\begin{tabular}{@{}p{\textwidth}@{}}
\centering
\begin{tabular}{llrrrr}
\toprule
 & & \multicolumn{2}{c}{$n=20$} & \multicolumn{2}{c}{$n=50$} \\ 
Simulator & Diagnostic & OPE & PPE & OPE &  PPE \\
\midrule
& MASPE &  2.758 & 0.736 & 2.142 &  0.664   \\
Influenza & RMSPE &  1.107 & 1.424 & 0.632  & 1.080 \\
& MGES ($\times 10^3$) & -4586 & -270 & -2692 & -37 \\ 
&&&&&\\
& MASPE  & 0.818 & 0.890  & 0.724 & 0.823 \\
Environmental & RMSPE & 0.112 & 0.133 & 0.069 & 0.149 \\
& MGES ($\times 10^3$) & 814 & 1260 & 1047  & 1885 \\
\bottomrule
\end{tabular}
\end{tabular}
\begin{tablenotes}[para]
\item [1] MASPE should be broadly close to $1$.
\item [2] RMSPE is a smaller-the-better quantity.
\item [3] MGES is a larger-the-better quantity
\end{tablenotes}
\end{threeparttable}
\end{table}

\newpage
\section{Synthetic Application: Environmental Simulator}\label{sec:app2}

\subsection{The Simulator}
As a second example, consider an environmental simulator modeling a pollutant spill caused by a chemical accident \cite{bliznyuk2008bayesian}. A mass $x_1\in[7,13]$ of pollutant is spilled at each of two locations in a long, narrow channel, denoted by the space-time vectors $(0, 0)$ and $(x_2, x_3)$. The first spill is fixed; however, the time and spatial location of the second spill are inputs. The location of the second spill is almost the entire domain, $x_2\in[0.01,3]$, but the time of the second spill is quite restrictive, $x_3\in[30.01,30.295]$. A fourth input variable, $x_4\in[0.02,0.12]$, is the constant diffusion rate in the channel.

The simulator output, log pollutant concentration at spatial location $s_{i_1}\in[0,3]$ and time $t_{i_2}\in[0,60]$ is defined as the following closed-form function of $\bx = (x_1, \ldots, x_4)^\top$: 
\begin{align}\label{eq:env}
\begin{split}
    \simx{i_1i_2} & = \log\left\{\sqrt{4\pi}C(\bx;\,s_{i_1}, t_{i_2}) + 1\right\}\,\quad i_1 = 1,\ldots,r_1; \, i_2 = 1, \ldots, r_2\,, \\
    C(\bx;\,s, t) & = \frac{x_1}{\sqrt{4\pi x_4t}}\exp\left(\frac{-s^2}{4x_4t}\right)+\frac{x_1}{\sqrt{4\pi x_4(t-x_3)}}\exp\left(\frac{-(s-x_2)^2}{4x_4(t-x_3)}\right)\indicator(x_3>t)\,,
\end{split}
\end{align}
where  $\indicator$ is the indicator function. Figure~\ref{fig:env} shows the simulator output as a function of $s$ and $t$ for the values $x_1 = 10, x_2 = 1.505,x_3 = 30.1525$ and $x_4 = 0.07$.

Note that the time and location of the first release are fixed and the time of the second release varies slightly around the halfway point of the time interval. Therefore, for $t\in(0,\sim30)$, the simulator output only depends on $x_1$, the mass of pollutant spilled at each location, and $x_4$, the diffusion rate. Hence, emulation of the output in this time region is straightforward. However, for $t\in(\sim30, 60)$, the output is very sensitive to the other inputs, making emulation more difficult, see Figure~\ref{fig:env_trace}. This is especially true (i) around the second release at $t\approx 30$; and (ii) if a good correlation structure is not assumed for the output space.

\begin{figure}[ht]
\begin{subfigure}{.45\textwidth}
    \centering
    \includegraphics[width=\textwidth]{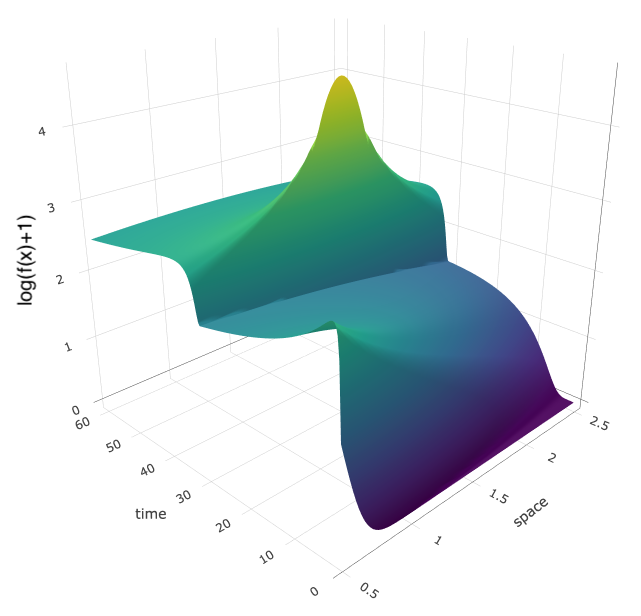}
    \caption{}
    \label{fig:env}  
\end{subfigure}
\begin{subfigure}{.54\textwidth}
    \centering
    \includegraphics[height=7.2cm]{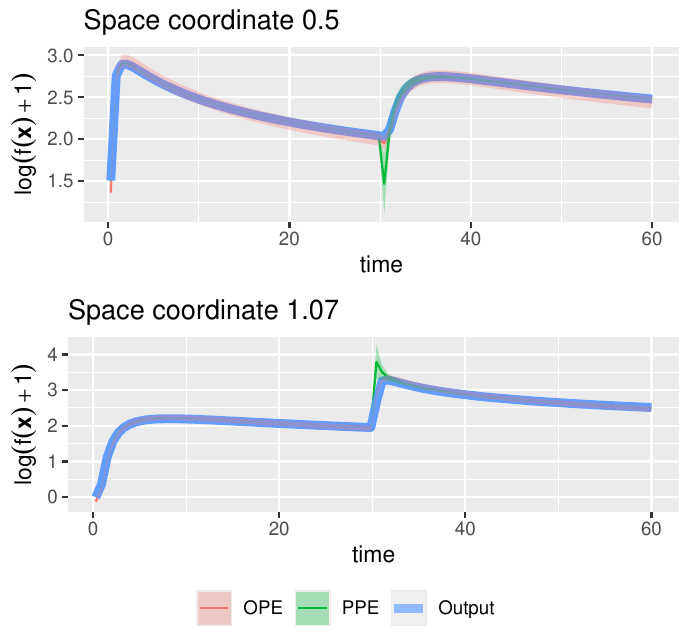}
    \caption{}
     \label{fig:env_trace}  
\end{subfigure}
\caption{Output for the environmental simulator~(\ref{eq:env}) for the input values $x_1 = 10, x_3 = 30.1525, x_2 = 1.505$ and $x_4 = 0.07$. (a) As a function of space and time.  (b) Temporal slices at spatial locations $s = 0.5$ and $s=1.07$ with OPE and PPE mean predictions $\pm 3$ standard deviations from $n=50$ training points.}
\end{figure}

\subsection{Emulation}\label{sec:emulation2}
Once again, tensor-variate GP emulators were constructed under each of the OPE (Section~\ref{sec:OPE}) and PPE (Section~\ref{sec:PPE}) assumptions using designs and prior distributions from Section~\ref{sec:priorest}. For the OPE, the prior mean and correlation structure directly used spatial locations, with $\bg_1^\top(s_{i_1}) = (1, s_{i_1})$ and
$$
(\Sig_1)_{i_1i_1^\prime} = \exp\left(-\{s_{i_1} - s_{i_1^\prime}\}^2/\theta_1\right)\,,
$$
with hyperparameter $\theta_1>0$. Each simulator run produced concentrations for $r_1 = 15$ locations and $r_2 = 100$ times, defined by the discrete sets $s_{i_1}\in\mS = \{0.50, 0.64, \ldots, 2.50\}$ and $t_{i_2}\in\mT = \{0.3, 0.9,\ldots, 60\}$ \cite{bliznyuk2008bayesian}. 

\subsection{Results}\label{sec:results2}

\figref{fig:env_trace} shows examples of two sets of temporal predictions from the OPE and PPE for different spatial locations and one set of input parameters. The key feature revealed by these plots is the difficulty exhibited by PPE in predicting around the time of the second release, with a spike in variance and either under- or over-prediction in the mean. These results are fairly representative of the overall, qualitative, comparison of the two emulators, see \figref{fig:env:trace} for further examples.

As in Section~\ref{sec:app1}, \figref{fig:env_diag} provides an assessment of predictive performance for $10^4$ input-space-time points (combinations of $\bx^{(i)}$, ${s}_{i_1}$ and $t_{i_2}$), randomly selected from the diagnostic set $\mX_D$; numerical summaries (MASPE, RMSPE, MGES) are again provided in Table~\ref{tab:num_diag}. The difficulty of PPE to predict at the second release is evident in the much larger (than OPE) prediction uncertainty for larger simulator outputs, especially around $\simx{i_1 i_2} = 2$. OPE performs better for these difficult predictions, albeit with higher variance for small values of $\simx{i_1 i_2}$ due to the constant space-time variance assumption. This assumption also leads to PPE having larger (better) MGES values in Table~\ref{tab:num_diag}, influenced by more precise predictions in these regions. However, the OPE has substantially lower RSMPE values, especially for $n=50$. Interestingly, increasing the run size from $n=20$ to $n=50$ does little to help with the poor PPE predictions around $\simx{i_1 i_2} = 2$, although it does reduce prediction variance even more for small $\simx{i_1 i_2}$, For OPE, increasing the design size leads to an overall decrease in prediction variance.    

The combination of a metric space in both the space and time locations \cite{rougier2008eef}, combined with spatio-temporal structure in both the regressors and correlation function, provides OPE with an advantage in this example. The concentration output at a given spatial location near the time of the second release can be very variable, depending on the inputs (especially the location of the second release). Hence the PPE near that time point can have high uncertainty, as it does not have the same ability to borrow strength across the space-time locations.

\begin{figure}
\begin{center}
\begin{subfigure}{.49\textwidth}
    \centering
    \includegraphics[width=\textwidth]{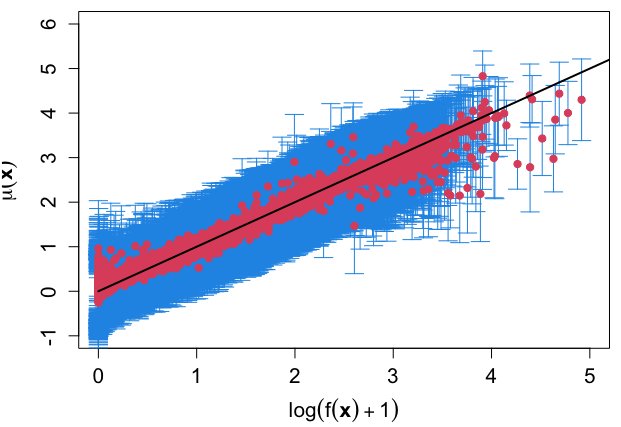}
    \caption{OPE $n = 20$}
\end{subfigure}
\begin{subfigure}{.49\textwidth}
    \centering
    \includegraphics[width=\textwidth]{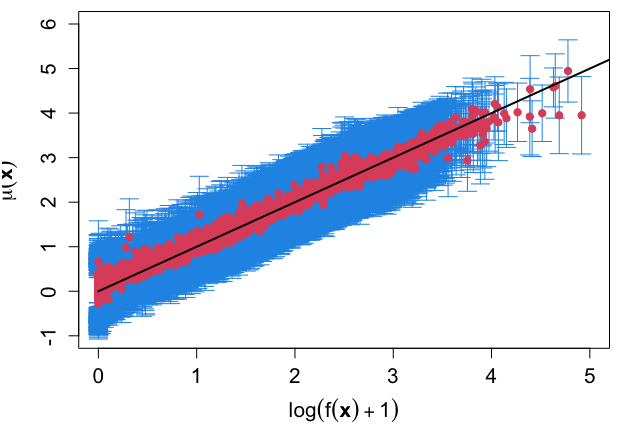}
    \caption{OPE $n = 50$}
\end{subfigure}

\begin{subfigure}{.49\textwidth}
    \centering
    \includegraphics[width=\textwidth]{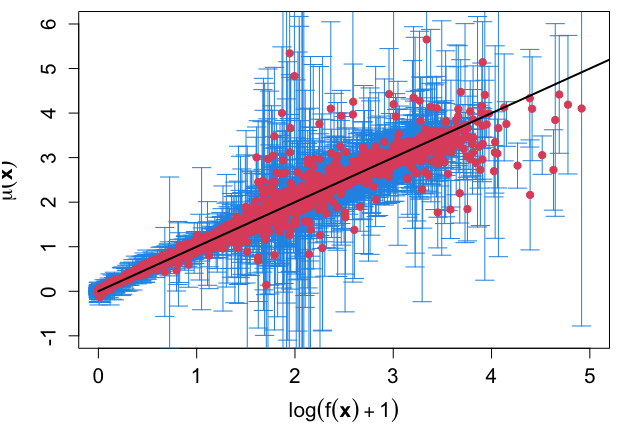}
    \caption{PPE $n = 20$}
\end{subfigure}
\begin{subfigure}{.49\textwidth}
    \centering
    \includegraphics[width=\textwidth]{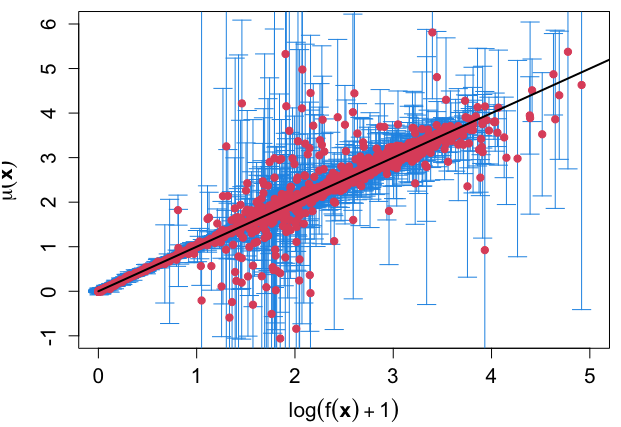}
    \caption{PPE $n= 50$}
\end{subfigure}
\end{center}
\caption{Emulator expectation $\pm3$ standard deviations against environmental simulator output value for $10^4$ points (input, space, time) from the set $\mX_D$ of $n^\prime = 150$ diagnostics runs for OPE and PPE with $n = 20$ and $n = 50$ training points.} 
\label{fig:env_diag}
\end{figure}


\section{Conclusions\label{sec:conc}}

Tensor-variate Gaussian processes are a powerful tool for emulation of multi-output simulators, which generalize previous approaches. Applications of such simulators are common in areas such as biomedicine \cite{sc2020} and engineering \cite{toal2023}, in addition to spatio-temporal modeling. There are also synergies with multi-task learning using Gaussian processes \cite{bcw2007}. 

The important special cases of OPE and PPE have been compared across two case studies, using a practically important simulator from public health and a synthetic environmental simulator. PPE has modeling advantages when the simulator outputs are unstructured, or the structure adds little to the understanding of the system, particularly if larger computer experiments can be performed. When an advantage can be taken of correlation in the output dimensions, OPE can provide more accurate predictions, especially in cases where output structure can compensate for high variation with respect to the simulator inputs. This higher accuracy does come at a higher computational cost, especially for larger numbers of output locations. However, for many applications, computational expense will be dominated by the computer experiment on the simulator, rather than the cost of fitting the emulator.

Extensions of this work could address the additional computation complexity of the TvGP regression, particularly the OPE, through kernel choice for the coregionalization matrices (e.g., tapered kernels \cite{kbhhf2011}) or include heteroskedastic errors across output dimensions. Variable selection could be considered for general TvGP regression and for the OPE in particular, including the choice of regressors in the output locations. TvGP regression could also be applied to provide prediction for output locations that differ from those of the simulator, provided a continuous output domain can be defined.   

\section*{Acknowledgments}
We are grateful to CrystalCast project members for their invaluable comments and discussion. Particular thanks are due to Professor Veronica Bowman and Dr Hannah Williams (both Defence Science and Technology Laboratory, UK), and Dr Thomas Finnie (UK Health Security Agency) for providing the influenza simulator and associated discussions.




\bibliographystyle{unsrt}


\appendix
\counterwithin{lemma}{section}
\numberwithin{equation}{section}
\numberwithin{figure}{section}
\renewcommand{\theequation}{A.\arabic{equation}}
\renewcommand{\thefigure}{A.\arabic{figure}}

\section{Summaries of predictive performance}\label{appendix:diag}

The following quantities are used to summarize emulator predictive performance. In each case, $\mu_f(\cdot)$ is the emulator mean, $v_f(\cdot)$ the emulator variance and $\left\{\tilde{\bx}^{(k^\prime)}, f(\tilde{\bx}^{(k^\prime)})\right\}_{k^\prime=1}^{n^\prime}$ is a set of test points (inputs and simulator outputs). 

\begin{enumerate}
\item Mean Absolute Standardised Prediction Error (MASPE) \cite{jackson2023}:
\be
\frac{1}{n^\prime} \sum_{k^\prime=1}^{n^\prime} \frac{ \left| f(\tilde{\bx}^{(k^\prime)}) - \mu_f(\tilde{\bx}^{(k^\prime)}) \right| }{\sqrt{\nu_f(\tilde{\bx}^{(k^\prime)})}}\,.\label{eq_MASPE}
\ee
\item Root Mean Squared Prediction Error (RMSPE) \cite{bastos2008dfg}:
\be
\sqrt{ \frac{1}{n^\prime} \sum_{k^\prime=1}^{n^\prime} \left\{f(\tilde{\bx}^{(k^\prime)}) - \mu_f(\tilde{\bx}^{(k^\prime)}) \right\}^2 }\,.  
\label{eq_RMSE}
\ee
\item Mean Generalised Entropy Score (MGES) \cite{gneiting2007sps}:
\be
- \, \frac{1}{n^\prime}\sum_{k^\prime=1}^{n^\prime} \left \{ \frac{ \left[ f(\tilde{\bx}^{(k^\prime)}) - \mu_f(\tilde{\bx}^{(k^\prime)}) \right] ^2 }{\nu_f(\tilde{\bx}^{(k^\prime)})} + \log\left(\nu_f(\tilde{\bx}^{(k^\prime)})\right)  \right \}\,. \label{eq_GR27}
\ee
\end{enumerate}

MASPE is a measure of emulator validity; heuristically we expect this value to be around 1 (assuming normal errors this value should be $ \sqrt{2/\pi} $ based on properties of the half-normal distribution).  RMSPE permits comparison of emulator accuracy.  MGES is larger (better) for approximations that are both accurate and valid.

\section{Supporting comparison of the emulators}\label{appendix:comp}
\subsection{Vectorized mean tensors}\label{appendix:vec}
Vectorization of the mean tensor $\ten{M}$, applying definition~(\ref{eq:vec}) with the OPE mean~(\ref{eq:OPEmean}), leads to a mean vector
\be\label{eq:vectOPE}
\vect\left({\ten{M}}\right) = \left(\otimes_{i=0}^m G_i\right)\bb\,,
\ee
with 
$$
G_i = 
\begin{bmatrix}
\mathbf{g}_i^\top(\bx^{(1)}) \\
\vdots \\
\mathbf{g}_i^\top(\bx^{(n)})
\end{bmatrix}\,.
$$
Similarly, assuming the PPE mean~(\ref{eq:PPEmean}) leads to a mean vector
\be\label{eq:vectPPE}
\vect\left({\ten{M}}\right) = \left(G_0\otimes I_{r}\right)\vect\left(\ten{B}\right)\,.
\ee

Hence, when vectorized via~(\ref{eq:vecMN}), both emulators can be cast in the outer product emulator framework; the mean vector takes the form of a linear predictor with both the model matrix and the variance-covariance matrix having Kronecker form and efficient computation can be achieved \cite{rougier2008eef}. In fact, the independence of the parallel emulators in the PPE leads to even more efficient computation.

\subsection{The role of spatio-temporal coregionalization}\label{appendix:coreg}

From \eqref{eq:TvGPpredmean} it is clear that, conditional on the mean tensor $\ten{M}(\tilde{\bx})$, the posterior predictive mean at new input $\tilde{\bx}$ is independent of the coregionalization matrix $\Sig = \otimes_{i=1}^m\Sig_i$. In fact, for the PPE with a non-informative prior on regression parameters, it can be shown that this independence also holds unconditionally \cite{gu2016ppg}. We restate the result here using the notation from this paper.

\begin{lemma}\label{lemma:ppe}
Assume a TvGP implying~(\ref{eq:vecMN}) with vectorized mean tensor~(\ref{eq:vectPPE}) and constant prior on $\vect(B)$. The generalized least squares estimator for $\vect(B)$, or equivalently the posterior mean, is independent of the coregionalization matrix $\Sig = \otimes_{i=1}^m\Sig_i$ and has the form
$$
\widehat{\vect(\ten{B})} = \left[\left(G_0^\top K^{-1}G_0\right)^{-1}G_0^\top K^{-1} \otimes I_r \right]\vect(\ten{F})\,.
$$
\end{lemma}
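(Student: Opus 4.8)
The plan is to reduce the tensor/Kronecker statement to the classical generalized least squares (GLS) formula for a linear model with a Kronecker-structured error covariance, and then observe that the Kronecker structure makes the coregionalization factors cancel. Starting from the vectorized model~(\ref{eq:vecMN}), under the PPE mean~(\ref{eq:vectPPE}) we have $\vect(\ten{F}) \sim \mathcal{N}_{nr}\big((G_0\otimes I_r)\vect(\ten{B}),\, \bK\otimes\Sig\big)$ with $\Sig = \otimes_{i=1}^m\Sig_i$. With a constant (improper uniform) prior on $\vect(\ten{B})$, the posterior mean coincides with the GLS estimator
$$
\widehat{\vect(\ten{B})} = \left[(G_0\otimes I_r)^\top (\bK\otimes\Sig)^{-1}(G_0\otimes I_r)\right]^{-1}(G_0\otimes I_r)^\top(\bK\otimes\Sig)^{-1}\vect(\ten{F})\,.
$$

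The key algebraic steps are then purely Kronecker-product manipulations: use $(\bA\otimes\bB)^{-1} = \bA^{-1}\otimes\bB^{-1}$ and $(\bA\otimes\bB)(\bC\otimes\bD) = \bA\bC\otimes\bB\bD$ together with $(\bA\otimes\bB)^\top = \bA^\top\otimes\bB^\top$. Applying these, $(G_0\otimes I_r)^\top(\bK\otimes\Sig)^{-1}(G_0\otimes I_r) = (G_0^\top\bK^{-1}G_0)\otimes\Sig^{-1}$, whose inverse is $(G_0^\top\bK^{-1}G_0)^{-1}\otimes\Sig$. Likewise $(G_0\otimes I_r)^\top(\bK\otimes\Sig)^{-1} = (G_0^\top\bK^{-1})\otimes\Sig^{-1}$. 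Multiplying these two,
$$
\left[(G_0^\top\bK^{-1}G_0)^{-1}\otimes\Sig\right]\left[(G_0^\top\bK^{-1})\otimes\Sig^{-1}\right] = \left[(G_0^\top\bK^{-1}G_0)^{-1}G_0^\top\bK^{-1}\right]\otimes\left(\Sig\Sig^{-1}\right) = \left[(G_0^\top\bK^{-1}G_0)^{-1}G_0^\top\bK^{-1}\right]\otimes I_r\,,
$$
which is exactly the claimed expression, and manifestly free of $\Sig$ (hence of every $\Sig_i$). The equivalence of the GLS estimator with the posterior mean under a flat prior is the standard conjugate-Gaussian result and can be cited (e.g.\ \cite{jackson2018dop}) rather than rederived.

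I would present it in three short movements: (i) state the vectorized model and invoke the flat-prior/GLS identification of the posterior mean; (ii) carry out the Kronecker simplification of the two bracketed factors; (iii) multiply and read off the cancellation of $\Sig$. The only point needing a word of care is well-definedness of the estimator: this requires $G_0^\top\bK^{-1}G_0$ to be invertible (equivalently $G_0$ full column rank and $\bK$ positive definite), which holds under the standing assumptions on the regressors and the correlation function; I would note this in passing. The main obstacle is essentially bookkeeping — keeping the order of Kronecker factors consistent between $(G_0\otimes I_r)$ and $(\bK\otimes\Sig)$ so that the $G_0$-block and the $\Sig$-block never get transposed relative to each other — but there is no genuine difficulty, since everything follows from the mixed-product property and the block-diagonal-in-Kronecker-sense structure of the problem.
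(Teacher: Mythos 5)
Your proposal is correct and follows essentially the same route as the paper: write down the GLS/flat-prior posterior mean for the vectorized model with design matrix $G_0\otimes I_r$ and covariance $\bK\otimes\Sig$, then apply the Kronecker mixed-product and inversion identities so that $\Sig$ cancels against $\Sig^{-1}$, leaving $\bigl(G_0^\top \bK^{-1}G_0\bigr)^{-1}G_0^\top \bK^{-1}\otimes I_r$. The only addition beyond the paper's argument is your (correct and worthwhile) remark on the invertibility of $G_0^\top \bK^{-1}G_0$.
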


\begin{proof}
For the most general tensor PPE we have 
$$
\begin{aligned}
\widehat{\vect(\ten{B})} & = \left[\left(G_0^\top \otimes I_r \right)\left(K^{-1}\otimes \Sig^{-1}\right)\left(G_0\otimes I_r \right)\right]^{-1}\left(G_0^\top\otimes I_r \right)\left(K^{-1}\otimes \Sig^{-1}\right)\vect(\ten{F}) \\
& = \left[\left(G_0^\top K^{-1}G_0 \right)^{-1}\otimes \Sig\right]\left(G_0^\top K^{-1}\otimes \Sig^{-1}\right)\vect(\ten{F}) \\
& = \left[\left(G_0^\top K^{-1}G_0\right)^{-1}G_0^\top K^{-1}\otimes I_r \right]\vect(\ten{F})\,,
\end{aligned}
$$
which does not involve $\Sig$.
\end{proof}

The form of the GLS estimator from Lemma~\ref{lemma:ppe} also demonstrates the independence of the estimators for each output location.

In contrast, the GLS estimator for the tensor OPE does not simplify to such an extent, even with non-informative priors on the regression coefficients, as demonstrated via Lemma~\ref{lemma:ope}.

\begin{lemma}\label{lemma:ope}
Assume a TvGP implying~(\ref{eq:vecMN}) with vectorized mean tensor~(\ref{eq:vectOPE}) and constant prior on $\bb$. The generalized least squares estimator for $\bb$, or equivalently the posterior mean, depends on the coregionalization matrix $\Sig = \otimes_{i=1}^m\Sig_i$ and has the form
$$
\hat{\bb} = \left[\left(G_0^\top K^{-1} G_0\right)^{-1}G_0^\top K^{-1} \otimes 
\left(G_{1:m}^\top \Sig^{-1} G_{1:m}\right)^{-1}G_{1:m}^\top\Sig^{-1}\right]\vect(\ten{F})\,,
$$
with $G_{1:m} = \otimes_{i=1}^m G_i$.
\end{lemma}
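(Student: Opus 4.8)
The plan is to write the generalized least squares estimator (equivalently, the posterior mean under a flat prior on $\bb$) in the standard form $\hat{\bb} = (X^\top V^{-1} X)^{-1} X^\top V^{-1}\vect(\ten{F})$, read off the model matrix $X$ and the covariance matrix $V$ from~(\ref{eq:vectOPE}) and~(\ref{eq:vecMN}), and then repeatedly apply the Kronecker mixed-product identity $(A\otimes B)(C\otimes D) = AC\otimes BD$ together with $(A\otimes B)^{-1} = A^{-1}\otimes B^{-1}$ and $(A\otimes B)^\top = A^\top\otimes B^\top$.

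First I would identify $X = \otimes_{i=0}^m G_i = G_0\otimes G_{1:m}$, with $G_{1:m} = \otimes_{i=1}^m G_i$, and $V = \bK\otimes\Sig$ with $\Sig = \otimes_{i=1}^m\Sig_i$, recording the hypotheses needed for the relevant inverses to exist ($G_0$ and $G_{1:m}$ of full column rank, $\bK$ and each $\Sig_i$ positive definite). A short computation then gives $X^\top V^{-1} X = (G_0^\top\bK^{-1}G_0)\otimes(G_{1:m}^\top\Sig^{-1}G_{1:m})$, hence $(X^\top V^{-1}X)^{-1} = (G_0^\top\bK^{-1}G_0)^{-1}\otimes(G_{1:m}^\top\Sig^{-1}G_{1:m})^{-1}$, while $X^\top V^{-1} = (G_0^\top\bK^{-1})\otimes(G_{1:m}^\top\Sig^{-1})$; multiplying these and applying the mixed-product identity once more collapses the expression to the stated form.

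The one point worth stating carefully --- rather than a genuine obstacle --- is why the dependence on $\Sig$ does not cancel. In the right-hand Kronecker factor one is left with $(G_{1:m}^\top\Sig^{-1}G_{1:m})^{-1}G_{1:m}^\top\Sig^{-1}$, which simplifies to $G_{1:m}^{-1}$ (and so becomes $\Sig$-free) exactly when $G_{1:m}$ is square and invertible. For the OPE this fails in general: $G_{1:m}$ has $\prod_z r_z$ rows and $\prod_z v_z$ columns, with typically $\prod_z v_z < \prod_z r_z$, so this factor genuinely involves $\Sig$. This is precisely the contrast with Lemma~\ref{lemma:ppe}, where the analogue of $G_{1:m}$ is $I_r$, which is invertible, so there the $\Sig$ factors cancel. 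For the ``posterior mean'' clause I would simply remark that, under the improper flat prior on $\bb$, the Gaussian model~(\ref{eq:vecMN}) has a posterior for $\bb$ centered at the GLS estimator (the usual normal--normal limit as the prior precision tends to zero), so no additional argument is needed. I expect no substantive difficulty beyond this bookkeeping and making the invertibility hypotheses explicit.
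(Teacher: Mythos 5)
Your proposal is correct and follows essentially the same route as the paper: identify $X = G_0\otimes G_{1:m}$ and $V = \bK\otimes\Sig$, then collapse $(X^\top V^{-1}X)^{-1}X^\top V^{-1}$ via the Kronecker mixed-product and inverse identities. Your added remark on why the $\Sig$-dependence survives (because $G_{1:m}$ is not square, in contrast to the identity factor in Lemma~\ref{lemma:ppe}) is a correct and worthwhile clarification, but the core argument is the one the paper gives.
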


\begin{proof}
For the most general tensor OPE we have
$$
\begin{aligned}
\hat{\bb} & = \left[\left(G_0^\top \otimes G_{1:m}^\top\right)\left(K^{-1}\otimes \Sig^{-1}\right)\left(G_0 \otimes G_{1:m}\right)\right]^{-1}\left(G_0^\top \otimes G_{1:m}^\top\right)\left(K^{-1}\otimes \Sig^{-1}\right)\vect(\ten{F}) \\
& = \left[\left(G_0^\top K^{-1} G_0\right)^{-1} \otimes \left(G_{1:m}^\top \Sig^{-1} G_{1:m}\right)^{-1}\right]\left( G_0^\top K^{-1} \otimes G_{1:m}^\top\Sig^{-1}\right)\vect(\ten{F}) \\
& = \left[\left(G_0^\top K^{-1} G_0\right)^{-1}G_0^\top K^{-1} \otimes \left(G_{1:m}^\top \Sig^{-1} G_{1:m}\right)^{-1}G_{1:m}^\top\Sig^{-1}\right]\vect(\ten{F})\,.
\end{aligned}
$$
Hence the GLS equations are a Kronecker product of equations across the input and output regressors involving $\Sig$.
\end{proof}
Therefore, for a tensor OPE the mean prediction has the property of smoothing across output locations, which can be beneficial when there are 
strong output dependencies.

\newpage
\clearpage
\section{Additional figures}
Here, we present some additional supporting figures for the two case studies.

\begin{figure}[H]
\begin{subfigure}{\textwidth}
\centering
\includegraphics[width=\textwidth]{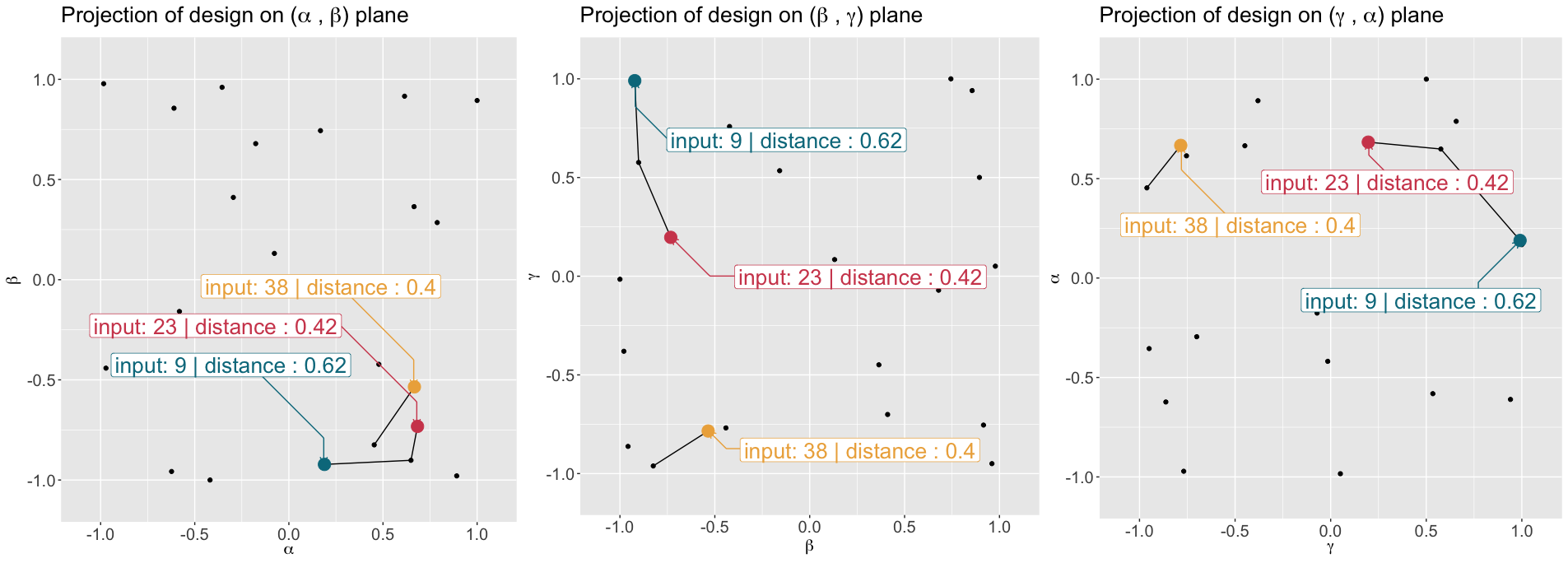}
\caption{Training design size: 20}
\end{subfigure}
\begin{subfigure}{\textwidth}
\centering
\includegraphics[width=\textwidth]{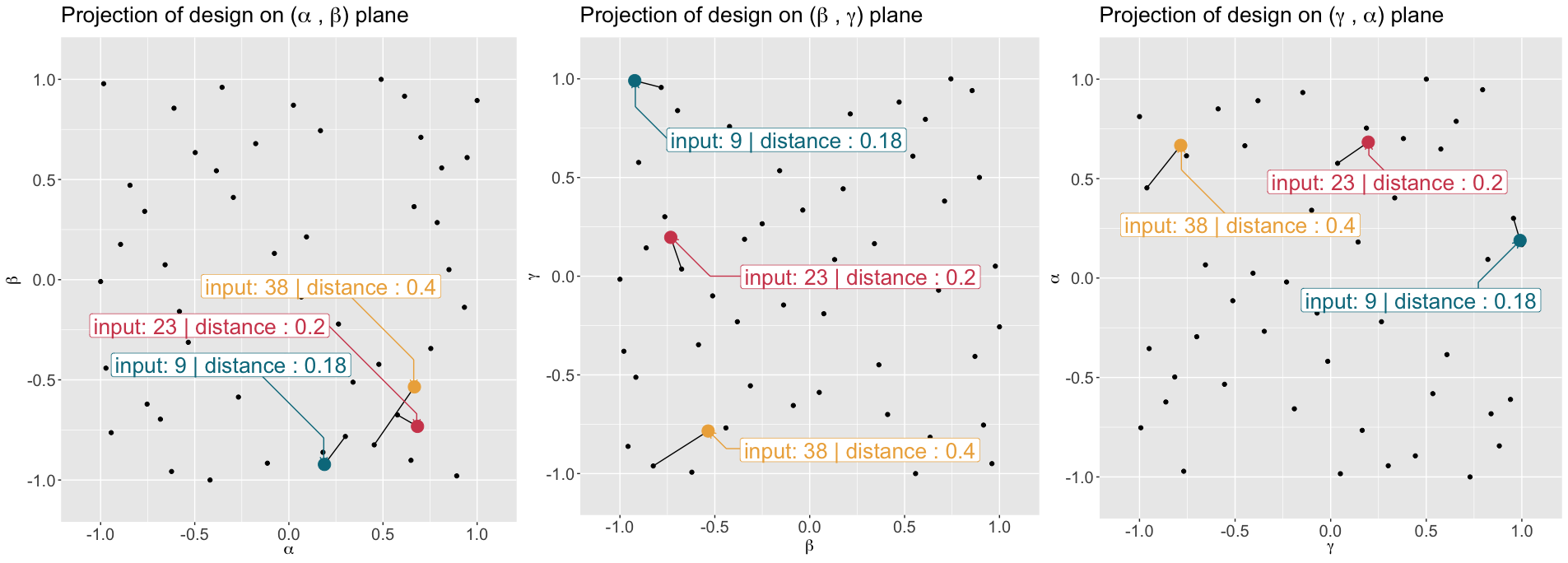}
\caption{Training design size: 50}
\end{subfigure}
\caption{Two-dimensional projections of the inputs for each of the test points presented in Figures~\ref{fig:OPEvPPE20} and~\ref{fig:OPEvPPE50} for the influenza simulator. Training design points are plotted as black dots. Each test point is connected to the nearest design point and the distance is stated in the label.}
\label{fig:dist_nearest}
\end{figure}

\begin{figure}
\centering
\includegraphics[width=.95\textwidth]{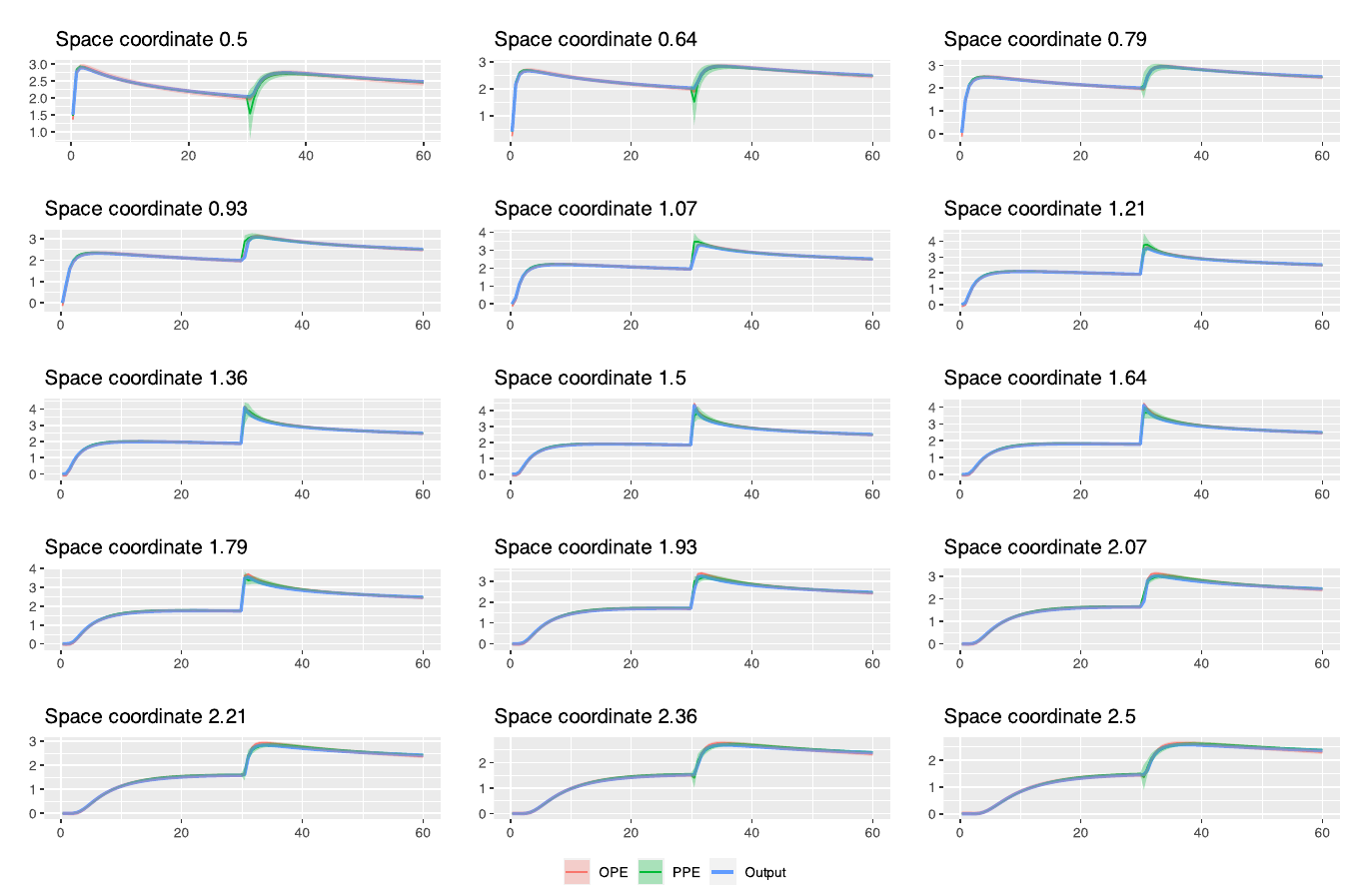}

\caption{\label{fig:env:trace}
Trace plots for the environmental simulator output, OPE and PPE mean predictions and uncertainty ($\pm 3$ standard deviations using $n=50$ training points) for 15 spatial locations at input $x_1 = 10, x_2 = 1.505, x_3 = 30.1525, x_4 = 0.07$.}
\end{figure}

\end{document}